\documentclass{llncs}

\usepackage{amssymb,stmaryrd,mathrsfs,dsfont,mathtools}

\input xy
\xyoption{all}

\usepackage{tikz}
\tikzstyle{vertex}=[circle, draw, inner sep=0pt, minimum size=6pt]
\newcommand{\vertex}{\node[vertex]}

\begin{document}
\frontmatter          
\pagestyle{headings}  

\mainmatter              

\title{On the Representation Number of Bipartite Graphs}

\titlerunning{On the Representation Number of Bipartite Graphs}  
%

\author{Khyodeno Mozhui \and 
K. V. Krishna} 

\authorrunning{Khyodeno Mozhui \and K. V. Krishna} 

\institute{Institute of Technology Guwahati, Guwahati, India,\\
	\email{k.mozhui@iitg.ac.in};\;\;\; 
	\email{kvk@iitg.ac.in}}

\maketitle              

\begin{abstract}
A word-representable graph is a simple graph $G$ which can be represented by a word $w$ over the vertices of $G$ such that any two vertices are adjacent in $G$ if and only if they alternate in $w$. It is known that the class of comparability graphs -- the graphs which admit a transitive orientation -- is precisely the class of graphs that can be represented by a concatenation of permutations of vertices. The class of bipartite graphs is a subclass of comparability graphs.   While it is an open problem to determine the representation number of comparability graphs, it was conjectured that the representation number of bipartite graphs on $n$ vertices is at most $n/4$. In this paper, we propose a polynomial time relabeling algorithm to produce a word representing a given bipartite graph which is a concatenation of permutations of the graph's vertices. Thus we obtain an upper bound for the representation number of bipartite graphs, which in turn gives us an upper bound for the dimension of the posets corresponding to bipartite graphs.
\end{abstract}

\keywords{Word-representable graphs, representation number, comparability graphs, bipartite graphs}

\section{Introduction}

A word-representable graph is a simple graph $G$ which can be represented by a word $w$ over the vertices of $G$ such that any two vertices are adjacent in $G$ if and only if they alternate in $w$. While the origin of word-representable graphs lies in the study of Perkins semigroups \cite{kitaev08order}, this notion covers many important classes of graphs including comparability graphs, circle graphs and 3-colorable graphs. The literature has several important contributions to the theory of word-representable graphs and their connections to various concepts. A comprehensive introduction to word-representable graphs and their connections and contributions can be found in the monograph \cite{kitaev15mono}.

The representation number of a word-representable graph is the minimum number $k$ such that the graph is represented by a word containing exactly $k$ copies of each letter. The class of complete graphs is the class of graphs with representation number 1 and it was established in \cite{halldorsson11} that the class of circle graphs characterizes the graphs with representation number 2. Although there are examples of graphs with higher representation number, no characterization of graphs is available for other numbers. 

The class of comparability graphs -- the graphs which admit a transitive orientation -- play a vital role in the theory of word-representable graphs. In fact, the neighborhood of each vertex in a word-representable graph is a comparability graph \cite{kitaev08}. Furthermore, the class of comparability graphs is precisely the class of permutationally representable graphs, that is, they can be represented by a concatenation of permutations of their vertices \cite{kitaev08order}. We call the minimum number of such permutations of vertices of a comparability graph its permutation representation number. Every comparability graph corresponds to a partially ordered set (poset) based on one of its transitive orientations. Due to their correspondence with posets, comparability graphs received importance in the literature. It is known that finding the permutation representation number of a comparability graph is equivalent to finding the dimension of the corresponding poset \cite{halldorsson11}. However, the problem of finding the dimension of a poset is NP-hard \cite{yanna82}.

The class of bipartite graphs is a subclass of the class of comparability graphs.   
While it is an open problem to determine the representation number of a comparability graph, it was conjectured in \cite{glen18} that the representation number of a bipartite graph on $n$ vertices is at most $n/4$. In this paper, we propose a polynomial time relabeling algorithm to produce a word that is a concatenation of permutations of vertices representing a given bipartite graph. Thus we obtain an upper bound for the permutation representation number of bipartite graphs and consequently an upper bound for their representation number. This indeed gives us an upper bound for the dimension of the posets corresponding to bipartite graphs. In fact, the upper bound for the permutation representation number is tight for the class of bipartite graphs.

\section{Preliminaries}

In this section, we provide necessary background material on words, graphs, and word-representable graphs. For the concepts that are not presented here, we may refer to  \cite{kitaev15mono,west}.   

Let $A$ be a finite set. A word over $A$ is a finite sequence of elements of $A$. The words are written by juxtaposing the symbols of the sequence. The empty sequence, called the empty word, is denoted by $\varepsilon$. A subword $u$ of a word $w$ is a subsequence of the sequence $w$ and it is denoted by $u \le w$. For instance, $abaa \le aabbaba$. Clearly, $\le$ is a partial order relation on the set of all words over $A$. We say the letters $a$ and $b$ alternate in the word $w$ if the subword consisting of all occurrences of $a$ and $b$, that is the subword obtained by deleting all other letters of $w$, is of the form $abab\cdots$ or $baba\cdots$ (of even or odd length). A word $w$ is said to be $k$-uniform if the number of occurrences of each letter in $w$ is $k$. 

Let $A = \{a_{i_1}, a_{i_2}, \ldots, a_{i_k}\}$ be a set and $i_1 < i_2 < \cdots < i_k$. We write $\text{Dec}(A)$ to represent the word $a_{i_k} a_{i_{k - 1}} \cdots a_{i_1}$ in which the symbols of $A$ appear exactly once and are arranged such that the subscripts are in decreasing order. 

In this paper, the term graph represents only a simple and undirected graph, i.e., a graph without loops or parallel edges. To distinguish between two-letter words and edges of graphs, we write $\overline{ab}$ to denote an undirected edge between vertices $a$ and $b$. A directed edge from $a$ to $b$ is denoted by $\overrightarrow{ab}$. The neighborhood of a vertex $a$ in a graph, denoted by $N(a)$, is the set of all vertices adjacent to $a$ in the graph. An orientation of a graph is an assignment of direction to each edge so that the resulting graph is a directed graph. An orientation of a graph is said to be a transitive orientation, if the adjacency relation on the vertices in the resulting directed graph is transitive. A graph $G$ is said to be a bipartite graph if its vertex set can be partitioned into $\{V_1, V_2\}$, called bipartition, such that every edge in $G$ connects a vertex in $V_1$ to a vertex in $V_2$. A bipartite graph with bipartition $\{V_1, V_2\}$, where $|V_1| = m$ and $|V_2| = n$, is said to be complete, denoted by $K_{m,n}$, if every vertex in $V_1$ is adjacent to all vertices of $V_2$. 

A graph $G = (V, E)$ is said to be word-representable if there is a word $w$ over the set $V$ such that two vertices $a$ and $b$ are adjacent in $G$ if and only if $a$ and $b$ alternate in $w$. A graph is said to be $k$-word-representable if there is $k$-uniform word representing the graph. Every word-representable graph is $k$-word-representable for some $k$ \cite{kitaev08}. The representation number of a word-representable graph $G$, denoted by $\mathcal{R}(G)$, is the smallest number $k$ such that $G$ is $k$-word-representable. A graph $G$ is said to be permutationally representable if it can be represented by a word of the form $w_1w_2\cdots w_k$, where each $w_i$ is a permutation of vertices of $G$. Furthermore, if a graph is permutationally representable involving $k$ permutations, then the graph is called permutationally $k$-representable. 

\begin{definition}
	Let $G$ be a permutationally representable graph. The permutation representation number of $G$, denoted by $\mathcal{R}^p(G)$, is the smallest $k$ such that $G$ is permutationally $k$-representable. 
\end{definition}

\begin{remark}
	If $G$ is a permutationally representable graph, then $\mathcal{R}(G) \le \mathcal{R}^p(G)$.
\end{remark}

The class of complete graphs has permutation representation number 1, i.e., $\mathcal{R}^p(K_n) = \mathcal{R}(K_n) = 1$ for all $n$, where $K_n$ is the complete graph on $n$ vertices.   It was shown in \cite[see the proof of Theorem 4]{halldorsson11} that $\mathcal{R}^p(H_{n,n}) = n$ (for $n \ge 2$) and in \cite{glen18} that $\mathcal{R}(H_{n,n}) = \lceil n/2 \rceil$  (for $n \ge 5$), where $H_{n,n}$ is the graph, called a crown graph, obtained by removing a perfect matching in $K_{n,n}$. Further, it was conjectured in \cite[see Conjecture 1 in page 93]{glen18} that every bipartite graph on $n$ vertices has representation number at most $n/4$.

A graph is said to be a comparability graph if it admits a transitive orientation. Every comparability graph is word-representable. In fact, the class of comparability graphs is precisely the class of permutationally representable graphs \cite{kitaev08order}. Every bipartite graph is a comparability graph. Based on its transitive relation, every comparability graph corresponds to a partially ordered set (poset). In fact, 
the class of bipartite graphs is precisely the class of comparability graphs that are isomorphic to the Hasse diagrams of the corresponding posets. 
 
\begin{remark}
	 Let $G$ be a comparability graph and $H$ be the Hasse diagram of the corresponding poset. If the graphs $G$ and $H$ are isomorphic, then  $G$ is a bipartite graph, and vice versa.   
\end{remark}

The dimension of a poset is the minimum number of linear orders of its elements such that their intersection is the partial order of the poset. It was shown in \cite{halldorsson11} that a comparability graph $G$ is permutationally $k$-representable if and only if the poset corresponding to $G$ has dimension at most $k$. However, since the problem of finding the dimension of a poset is NP-hard \cite{yanna82}, finding the permutation representation number of a comparability graph is NP-hard. Hence, in the direction of finding the permutation representation number of some subclasses of comparability graphs, in \cite{halldorsson11}, the authors obtained the permutation representation number of crown graph $H_{k,k}$. Here, the authors constructed linear orders whose intersection happens to be the corresponding poset of $H_{k,k}$. In the following section, we extend the idea to arbitrary bipartite graphs and construct linear orders of the elements of corresponding poset.

\section{Main Result}

In this section, we devise an algorithmic procedure to construct a word representing permutationally a given bipartite graph. Subsequently, we obtain an upper bound for the permutation representation number of bipartite graphs. Consequently, in view of the correspondence between comparability graphs and posets, we also present an upper bound for the dimension of a special class of posets which correspond to bipartite graphs.

\begin{remark}\label{isolated}
The permutation representation number of a graph does not change by the inclusion or deletion of isolated vertices. 
If a graph $G = (V, E)$ (with at least 3 vertices) has isolated vertices, say $a_1, a_2, \ldots, a_t$, then consider the induced subgraph $H$ of $G$ on the vertex set $V \setminus \{a_1, a_2, \ldots, a_t\}$. That is, $H$ is obtained by deleting the vertices $a_1, a_2, \ldots, a_t$ from $G$. Note that $G$ is permutationally representable if and only if $H$ is permutationally representable. In fact, if $w_1$, $w_2$, \ldots, $w_k$ are permutations of vertices of $H$ such that the word $w_1w_2 \cdots w_k$ represents the graph $H$ permutationally, then we can see that $G$ is also permutationally $k$-representable. For instance, the following word $w$ with $k$ permutations on the vertices of $G$ represents $G$ permutationally: 
\[w  =  \begin{cases*}
			w_1ur(u)w_2w_3ur(u)w_4 \cdots w_ku,\; \text{ if $k$ is odd};\\
			w_1ur(u)w_2w_3ur(u)w_4 \cdots r(u)w_k,\; \text{ if $k$ is even,}	
		\end{cases*}\]
where $u = a_1a_2 \cdots a_t$ and $r(u) = a_ta_{t-1} \cdots a_1$, the reversal of $u$.       
\end{remark}

In what follows, $G$ always denotes a bipartite graph with bipartition $\{V_1, V_2\}$ and $m = |V_1| \le |V_2|$. In view of Remark \ref{isolated}, we assume that $G$ has no isolated vertices.

\subsection{Relabeling algorithm}

We now present an algorithm to determine permutations of vertices of a given bipartite graph $G$ such that their concatenation represents $G$. It is achieved by relabeling the vertices of $G$ in a particular order based on their nonadjacent vertices and construct certain permutations of the vertices of $G$.  The algorithm is presented in the following.

\vspace{.5cm}
\hrule
\vspace{.5cm}
\begin{enumerate}
	\setlength\itemsep{.5em}
	\item Given a bipartite graph $G$, suppose $V_1 = \{a_1, \ldots, a_m\}$ and $V_2 = \{b_1, \ldots, b_n\}$.
	
	\item If $N(a) = V_2$ for all $a \in V_1$, then consider the following word and exit:
		\begin{center}
			$w = a_1 a_2 \cdots \ a_m b_1 b_2 \cdots b_n a_m a_{m-1} \cdots a_1  b_n  b_{n-1} \cdots \ b_1$.
		\end{center}
	
	\item Else, choose $a$ in $V_1$ such that $V_2 \setminus N(a) \ne \varnothing$ and label it as $c_1$.
	
	\item Relabel the remaining vertices of $V_1$ as $c_2, \ldots, c_m$, arbitrarily.
	
	\item Relabel the vertices in $V_2$ as follows:
		\begin{enumerate}
		\item Set $	m' = m$; $A = V_2$.
		
		\item For $i = 1$ to $m$,
			\begin{enumerate}
				\item Let $A_i = A \setminus N(c_i)$, say $\{b_{j_1},...,b_{j_{k_i}}\}$. 			
				
				\item Relabel the vertices of $A_i$ as $c_{m' + 1}, \ldots, c_{m' + k_i}$, arbitrarily.  
				
				\item  Set $m' = m' + k_i$; $A = A \setminus A_i$.
			\end{enumerate}
		\item If $A \ne \varnothing$, then relabel the remaining vertices of $A$ as $c_{m' + 1}, \ldots, c_{m + n}$, arbitrarily.
		\end{enumerate}
	
	\item Create a list of permutations of the vertices $c_1, \ldots, c_{m + n}$ as per the following:
		\begin{enumerate}
			\item Set $w_1 =  c_{m} c_{m-1} \cdots c_{2} c_{m+1} c_{m+2}  \cdots c_{m+k_1} c_{1} c_{m+k_1+1}  c_{m+k_1+2}  \cdots  c_{m+n}$.
			
			\item For $i = 2$ to $m$, 
			\begin{enumerate}
				\item[] If $V_2 \setminus N(c_i) \ne \varnothing$, then set
				
				\qquad $w_i = c_1 \cdots c_{i - 1} c_{i + 1} \cdots c_m \text{Dec}(V_2 \setminus N(c_i)) c_{i} \text{Dec}(N(c_i))$;
				
				else, set $w_i =  \varepsilon$.
			\end{enumerate}
		
			\item If $N(c) = V_2$ for some $c \in V_1$ then set 
			
				\qquad $w_{m + 1} =  c_1 c_2 \cdots c_m c_{m+n} \cdots c_{m+1}$;
				
				else, set $w_{m + 1} =  \varepsilon$.
		\end{enumerate}	
		
	\item  Report the word $w$ obtained by replacing the original labels of the vertices of $G$ in the word $w_1 \cdots w_m w_{m + 1}$. 
\end{enumerate}
\vspace{.5cm}
\hrule
\vspace{.5cm}

\begin{remark}
	The relabeling algorithm runs in a polynomial time. In fact, its complexity is $O(mn)$. Since the relabeling algorithm works by identifying the set of nonadjacent vertices (adjacent vertices) for each vertex of $V_1$, so the only possible nonadjacent vertices would be the vertices from $V_2$.
\end{remark}

\begin{example}
	We demonstrate the relabeling algorithm on the bipartite graph $G$ given in Figure  \ref{figG}. Although the selection of vertices can be arbitrary, here we implemented the algorithm based on their degrees. Accordingly, the vertices of $V_1$, viz., $a_1,a_2,a_3,a_4$ are relabeled, respectively, as $c_4,c_2,c_3,c_1$. Further,  the vertices of $V_2$, viz., $b_1,b_2,b_3,b_4,b_5$ are relabeled as $c_7,c_9,c_8,c_6,c_5$, respectively. The relabeled graph $G'$ is presented in Figure \ref{figG'}. As per the algorithm, the permutations $w_1, w_2, w_3, w_4$ and $w_5$ of vertices of $G'$ are produced below. Note that, as $N(c_3) = N(c_4) = V_2$, $w_3 = w_4 = \varepsilon$ and further $w_5$ will be nonempty.
	
	\begin{itemize}
		\item[] 	$w_1 = c_4 c_3 c_2 c_5 c_6 c_7 c_1 c_8 c_9$
		\item[] 	$w_2 = c_1 c_3 c_4 c_6 c_5 c_2 c_9 c_8 c_7$
		\item[] 	$w_5 = c_1 c_2 c_3 c_4 c_9 c_8 c_7 c_6 c_5$ 
		\item[]		$w_3 = w_4 = \varepsilon$
	\end{itemize}
	Concatenating all these permutations we get 
	\begin{center}
		$w' = w_1 w_2 w_3 w_4 w_5=c_4 c_3 c_2 c_5 c_6 c_7 c_1 c_8 c_9 c_1 c_3 c_4 c_6 c_5 c_2 c_9 c_8 c_7 c_1 c_2 c_3 c_4 c_9 c_8 c_7 c_6 c_5$
	\end{center} 
	which represents $G'$. Relabeling the vertices in $w'$ with their original labels we get 
	\begin{center}
		$w = a_1 a_3 a_2 b_5 b_4 b_1 a_4 b_3 b_2 a_4 a_3 a_1 b_4 b_5 a_2 b_2 b_3 b_1 a_4 a_2 a_3 a_1 b_2 b_3 b_1 b_4 b_5 $
	\end{center} 
	which represents $G$.

\begin{figure}[!htb]
	\centering
	\begin{minipage}{.5\textwidth}
		\centering
		\[\begin{tikzpicture}
			
			\vertex (a_1) at (1,0) [label=left:$a_1$] {};  
			\vertex (a_2) at (1,-1) [label=left:$a_2$] {};
			\vertex (a_3) at (1,-2) [label=left:$a_3$] {};
			\vertex (a_4) at (1,-3) [label=left:$a_4$] {};
			\vertex (b_1) at (4,0) [label=right:$b_1$] {};  
			\vertex (b_2) at (4,-1) [label=right:$b_2$] {};
			\vertex (b_3) at (4,-2) [label=right:$b_3$] {};
			\vertex (b_4) at (4,-3) [label=right:$b_4$] {};
			\vertex (b_5) at (4,-4) [label=right:$b_5$] {};  
			
			\path
			
			(a_1) edge (b_1)
			(a_1) edge (b_2)
			(a_1) edge (b_3)
			(a_1) edge (b_4)
			(a_1) edge (b_5)
			(a_2) edge (b_1)
			(a_2) edge (b_2)
			(a_2) edge (b_3)
			(a_3) edge (b_1)
			(a_3) edge (b_2)
			(a_3) edge (b_3)
			(a_3) edge (b_4)
			(a_3) edge (b_5)
			(a_4) edge (b_2)
			(a_4) edge (b_3)
			
			;  
			
		\end{tikzpicture}\]
		\caption{A bipartite graph $G$}
		\label{figG}
	\end{minipage}%
	\begin{minipage}{0.5\textwidth}
		\centering
		\[\begin{tikzpicture}
			
			\vertex (c_4) at (1,0) [label=left:$c_4$] {};  
			\vertex (c_2) at (1,-1) [label=left:$c_2$] {};
			\vertex (c_3) at (1,-2) [label=left:$c_3$] {};
			\vertex (c_1) at (1,-3) [label=left:$c_1$] {};
			\vertex (c_7) at (4,0) [label=right:$c_7$] {};  
			\vertex (c_9) at (4,-1) [label=right:$c_9$] {};
			\vertex (c_8) at (4,-2) [label=right:$c_8$] {};
			\vertex (c_6) at (4,-3) [label=right:$c_6$] {};
			\vertex (c_5) at (4,-4) [label=right:$c_5$] {};  
			
			\path
			
			(c_4) edge (c_7)
			(c_4) edge (c_9)
			(c_4) edge (c_8)
			(c_4) edge (c_6)
			(c_4) edge (c_5)
			(c_2) edge (c_7)
			(c_2) edge (c_9)
			(c_2) edge (c_8)
			(c_3) edge (c_7)
			(c_3) edge (c_9)
			(c_3) edge (c_8)
			(c_3) edge (c_6)
			(c_3) edge (c_5)
			(c_1) edge (c_9)
			(c_1) edge (c_8)
			
			;  
			
		\end{tikzpicture}\]
		\caption{Relabeled bipartite graph $G'$}
		\label{figG'}
	\end{minipage}
\end{figure}

\end{example}

\subsection{Correctness of the algorithm}

In the following theorem we establish the correctness of the relabeling algorithm. 
\begin{theorem}\label{main}
	The word $w$ generated by the relabeling algorithm permutationally represents the bipartite graph $G$.
\end{theorem}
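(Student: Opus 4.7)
My plan is to verify directly that two vertices of $G$ alternate in $w$ if and only if they are adjacent. I will first observe that each non-empty component of the concatenation $w_1 w_2 \cdots w_{m+1}$ is a permutation of $\{c_1, \ldots, c_{m+n}\}$, so $w$ (after discarding empty factors) is a concatenation of permutations of the full vertex set. Invoking the standard fact that two letters in such a concatenation alternate iff they appear in the same relative order in every component permutation, the proof reduces to tabulating, for each pair of vertices, their relative order in $w_1$, in each non-empty $w_i$ with $i \ge 2$, and in $w_{m+1}$ when non-empty.

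I will then split the analysis into three cases based on the bipartition: (i) both vertices in $V_1$, (ii) both in $V_2$, (iii) one in each part. Cases (i) and (ii) correspond to non-adjacent pairs and require a disagreement of orders across some pair of the $w_k$, while case (iii) requires agreement in all $w_k$ iff the pair is an edge. The relative orders can be read off from the explicit formulae: $w_1$ lists $V_1 \setminus \{c_1\}$ in decreasing index and $V_2$ in increasing index, with $c_1$ sandwiched between the non-neighbors and the neighbors of $c_1$ in $V_2$; each non-empty $w_i$ ($i \ge 2$) lists $V_1 \setminus \{c_i\}$ in increasing index, then $\text{Dec}(V_2 \setminus N(c_i))$, then $c_i$, then $\text{Dec}(N(c_i))$; and $w_{m+1}$, if non-empty, lists $V_1$ increasingly followed by $V_2$ decreasingly.

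Case (iii) is the most transparent: for $c_i \in V_1$ and $c_{m+p} \in V_2$, $c_i$ precedes $c_{m+p}$ in every $w_k$ with $k \ne i$, while in $w_i$ the order of $c_i$ and $c_{m+p}$ is controlled precisely by whether $c_{m+p} \in N(c_i)$, so agreement holds throughout iff the pair is an edge. Case (i) follows from the decreasing $V_1$-order in $w_1$ against the increasing $V_1$-orders elsewhere: for $c_i, c_j \in V_1$ with $i < j$, either $w_j$ is non-empty and places $c_i$ before $c_j$, or $w_j = \varepsilon$ forces $N(c_j) = V_2$ and hence $w_{m+1}$ non-empty, which also places $c_i$ before $c_j$; in either situation the order disagrees with $w_1$, which had $c_j$ before $c_i$.

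The main obstacle is case (ii). For $c_{m+p}, c_{m+q} \in V_2$ with $p < q$, the increasing order in $w_1$ puts $c_{m+p}$ before $c_{m+q}$, and I must locate a non-empty $w_k$ reversing this. Here the precise structure of the $V_2$-relabeling is essential: $V_2$ is partitioned into contiguously-labeled blocks $A_i = \bigl(\bigcap_{j<i} N(c_j)\bigr) \setminus N(c_i)$ together with a leftover set $\bigcap_i N(c_i)$. If both vertices lie in the same block $A_i$, the $\text{Dec}$-ordering inside $w_i$ reverses the $w_1$-order; if they straddle blocks $A_i$ and $A_j$ with $i < j$, then $w_j$ is non-empty and places $c_{m+q}$ before $c_{m+p}$ regardless of whether $c_{m+p} \in N(c_j)$, because in each of the two possibilities the two vertices fall into a common $\text{Dec}$-block of $w_j$ or else on opposite sides of $c_j$ in the configuration that reverses their order. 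When a leftover vertex is involved, I will invoke the condition of Step 6(c) together with the no-isolated-vertices assumption to guarantee the non-emptiness of $w_{m+1}$, whose decreasing $V_2$-order supplies the required reversal. Carrying out this bookkeeping completes the argument.
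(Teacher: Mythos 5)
Your cases (i) and (iii) are sound and essentially match the paper's argument, and your treatment of $V_2$-pairs via the block structure $A_i=\bigl(\bigcap_{j<i}N(c_j)\bigr)\setminus N(c_i)$ is a legitimate alternative to the paper's route (the paper instead picks a neighbour $c_r$ of the lower-labelled vertex and reads the reversal off $w_r$). The genuine gap is your leftover subcase of case (ii). The leftover set is $\bigcap_{i=1}^{m}N(c_i)$, and its non-emptiness does not imply the condition of Step 6(c): the common neighbourhood of $V_1$ can be non-empty while no single vertex of $V_1$ is adjacent to all of $V_2$, in which case $w_{m+1}=\varepsilon$ and the reversal you plan to extract from it is unavailable. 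Concretely, take $G=P_5$, the path $u\,x\,v\,y\,z$ with $V_1=\{x,y\}$, $V_2=\{u,v,z\}$: the leftover set is $N(x)\cap N(y)=\{v\}\neq\varnothing$, yet $N(x)\neq V_2\neq N(y)$, so $w_{m+1}=\varepsilon$.

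Moreover, this is not a repairable bookkeeping detail for the algorithm as literally stated. Running the algorithm on this $P_5$ with $c_1=x$, $c_2=y$ gives $z=c_3$, $u=c_4$ (the block $A_2$) and $v=c_5$ (leftover), and the output is $w_1w_2=c_2c_3c_1c_4c_5\,c_1c_4c_2c_5c_3$. The non-adjacent pair $\{c_4,c_5\}$ appears in the same order in both permutations, hence alternates: for a pair consisting of a leftover vertex and a lower-labelled block vertex, a reversal can only come from some $w_k$ with $k\ge 2$ in which both lie in $\mathrm{Dec}(N(c_k))$, or from a non-empty $w_{m+1}$, and here neither exists because the only neighbour of $u$ is $c_1$ and no vertex of $V_1$ has full neighbourhood. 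So no argument can close your leftover subcase for Step 6(a) exactly as written (with the tail of $w_1$ in increasing label order); note that the paper's own Case 3 quietly assumes the chosen neighbour $c_r$ satisfies $r\ge 2$, so you have landed on precisely the delicate point rather than on routine bookkeeping. A smaller omission: your outline never addresses the complete bipartite branch (the word of Step 2), which the theorem also covers.
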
	

\begin{proof}
If $G$ is a complete bipartite graph, then the algorithm produces the word $w$ given in Step 3. It is a routine verification to observe that the word $w$ represents $G$. 
If $G$ is not a complete bipartite graph, let $G'$ be the graph obtained after relabeling in Step 6. We show that the word $w' = w_1 \cdots w_m w_{m + 1}$ represents $G'$. Accordingly, it is straightforward that $w$ represents $G$.

We now prove that $\overline{ab}$ is an edge in $G'$ if and only if $a$ and $b$ alternate in the word $w'$. 

Suppose $\overline{ab}$ is an edge in $G'$. Without loss of generality assume $a \in V_1$ and $b \in V_2$. 	If $a = c_1$ then clearly $ab \le w_i$ (when $w_i \ne \varepsilon$) for all $i \ge 2$. Further, since $b \in N(a)$ we have $ab \le w_1$ too. Hence, $a$ and $b$ alternate in $w'$. 
If $a = c_p$ for $p \ge 2$, then $ab \le w_1$ and also $ab \le w_{m + 1}$ (when $w_{m+1} \ne \varepsilon$). Further $ab \le a\text{Dec}(N(a)) \le w_p$ (when $w_p \ne \varepsilon$). Hence, $ab \le w_i$ for all nonempty $w_i$ so $a$ and $b$ alternate in $w'$.

Conversely, suppose $\overline{ab}$ is not an edge in $G'$. We deal with this part in three cases, as the case $b \in V_1$ and $a \in V_2$ is symmetric to Case 2. In each case we identify two permutations produced by the algorithm: that one with the subword $ab$ and the other with the subword $ba$. 
	\begin{itemize}
		\setlength\itemsep{1em}
		\item Case 1: $a, b \in V_1$. Let $a = c_p$ and $b = c_q$ with $p > q$. Note $p \ne 1$. Clearly $ab = c_pc_q \le w_1$. If $N(c_p) \ne V_2$, then $w_{p} \ne \varepsilon$ and $ba = c_qc_p \le w_{p}$. Otherwise, $N(c_p) = V_2$ so that $w_{m+1} \ne \varepsilon$ and $ba = c_qc_p \le w_{m+1}$.
		
		\item Case 2: $a \in V_1$ and $b \in V_2$. 
		\begin{itemize}
			\item[-] Subcase 2.1: $a = c_1$. Since $b \notin N(a)$, $ba \le w_1$. Since there are no isolated vertices, $N(b) \ne \varnothing$ and hence there exists $c_q \in V_1$ such that $b \in N(c_q)$. Depending on $N(c_q) = V_2$ or not, $w_{m+1} \ne \varepsilon$ or $w_q \ne \varepsilon$, respectively. Accordingly, $ab \le w_{m+1}$ or $ab \le w_q$.
			
			\item[-] Subcase 2.2: $a = c_p$ for some $p \ge 2$. Clearly, $ab \le w_1$. Since $b \notin N(a)$, we have $N(c_p) \ne V_2$ so that $w_p \ne \varepsilon$. Then $ba = bc_p \le \text{Dec}(V_2 \setminus N(c_p))c_p \le w_p$.
		\end{itemize}
		
		\item Case 3: $a, b \in V_2$. Let $a = c_p$ and $b = c_q$ with $p > q$. Clearly, $ba \le w_1$. As there are no isolated vertices, $N(b) \ne \varnothing$. Hence, there exists $c \in V_1$ such that $b \in N(c)$.
			\begin{itemize}
				\setlength\itemsep{.5em}
				\item[-] If $N(c) = V_2$, then again $w_{m+1} \ne \varepsilon$ and $ab \le w_{m+1}$. 
				\item[-] If $N(c) \ne V_2$, let $c = c_r$ and note that $w_r \ne \varepsilon$. If  $a \in N(c)$, then $ab = c_pc_q \le \text{Dec}(N(c_r)) \le w_r$. Otherwise, $a \in V_2 \setminus N(c)$ so that $ab \le \text{Dec}(V_2 \setminus N(c_r))c_r\text{Dec}(N(c_r)) \le w_r$.
			\end{itemize}
	\end{itemize}
Hence,  $a$ and $b$ do not alternate in $w'$. This completes the proof.

\end{proof}	

\subsection{Upper bounds}

We now obtain an upper bound for the permutation representation number of arbitrary bipartite graphs and also an improved upper bound in case of bipartite graphs with a special property.

First note that, for each vertex in $V_1$, the relabeling algorithm produces at most one permutation of the vertices of $G$. Hence, the algorithm produces a word consisting of at most $m$ permutations of vertices of $G$. Accordingly, we obtain an upper bound for the representation number of bipartite graphs in the following corollary of Theorem \ref{main}.  

\begin{corollary}\label{ub}
	For a bipartite graph $G$, the permutation representation number $\mathcal{R}^p(G) \leq m$. Consequently, $\mathcal{R}(G) \leq m$.
\end{corollary}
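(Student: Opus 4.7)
The plan is to combine Theorem~\ref{main} with a direct counting of the nonempty blocks in the output word. By Theorem~\ref{main}, the concatenation $w_1 w_2 \cdots w_m w_{m+1}$ produced by the relabeling algorithm permutationally represents $G$, so it suffices to bound by $m$ the number of blocks that are actual permutations (as opposed to $\varepsilon$).

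I would first dispose of the complete bipartite case separately: when $N(a) = V_2$ for every $a \in V_1$, the algorithm exits at Step~2 with a word that is the concatenation of exactly two permutations of $V_1 \cup V_2$, so $\mathcal{R}^p(G) \le 2 \le m$ in the nondegenerate range $m \ge 2$. For the remainder of the argument I would assume $G$ is not complete bipartite and inspect Step~6 block by block. By construction: $w_1$ is always a full permutation of $c_1,\ldots,c_{m+n}$; for each $i \in \{2, \ldots, m\}$, the block $w_i$ is a permutation iff $V_2 \setminus N(c_i) \ne \varnothing$, and is $\varepsilon$ otherwise; and $w_{m+1}$ is a permutation iff there exists $c \in V_1$ with $N(c) = V_2$, and is $\varepsilon$ otherwise.

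The only nontrivial point is to show that $w_{m+1}$ and the blocks $w_2, \ldots, w_m$ cannot all be nonempty at once. The key observation is that if $w_{m+1}$ is nonempty, then some $c_p \in V_1$ satisfies $N(c_p) = V_2$, and by the choice made in Step~3 we must have $p \ge 2$ (since $c_1$ was explicitly selected so that $V_2 \setminus N(c_1) \ne \varnothing$); this forces $w_p = \varepsilon$. Hence at most $m-1$ blocks among $w_2, \ldots, w_{m+1}$ are nonempty, and together with $w_1$ this yields at most $m$ permutations in the output.

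Therefore $\mathcal{R}^p(G) \le m$, and the consequence $\mathcal{R}(G) \le \mathcal{R}^p(G) \le m$ is immediate from the earlier remark. The counting above has no real obstacle; the most delicate step is the bookkeeping that pairs up the single appearance of $w_{m+1}$ with the corresponding empty $w_p$, which is exactly why the choice of $c_1$ in Step~3 is rigged to satisfy $V_2 \setminus N(c_1) \ne \varnothing$.
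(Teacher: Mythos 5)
Your argument is correct and is essentially the paper's own proof: the paper simply notes that the algorithm produces at most one permutation per vertex of $V_1$ (hence at most $m$ blocks) and invokes Theorem~\ref{main}, and your block-by-block count, including charging $w_{m+1}$ to a vertex $c_p$ with $N(c_p)=V_2$ whose own block $w_p$ is empty, is just a more explicit rendering of that same bookkeeping.
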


Since $\mathcal{R}^p(H_{n,n}) = n$ (for $n \ge 2$), the upper bound obtained in Corollary \ref{ub} is tight for the permutation representation number of the class of bipartite graphs.

\begin{corollary}
	The dimension of a poset corresponding to a bipartite graph is bounded above by $m$.
\end{corollary}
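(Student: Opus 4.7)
The plan is to deduce the corollary directly by chaining Corollary \ref{ub} with the dimension/permutation-representation correspondence recalled from \cite{halldorsson11}. That correspondence says that a comparability graph $G$ is permutationally $k$-representable if and only if the poset corresponding to $G$ has dimension at most $k$; in particular, the dimension of the poset equals $\mathcal{R}^p(G)$.

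First I would note that, as observed in the preliminaries, every bipartite graph is a comparability graph, so the equivalence above applies. Taking $k = \mathcal{R}^p(G)$ gives that the dimension of the poset corresponding to $G$ is at most $\mathcal{R}^p(G)$. Then I would invoke Corollary \ref{ub}, which yields $\mathcal{R}^p(G) \le m$, and combining the two inequalities delivers the stated bound.

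There is no real obstacle to clear here: the content of the corollary is entirely packaged into the two results being combined, and the proof is a one-line chain of inequalities. The only thing worth making explicit in writing is that $m$ refers to the size of the smaller part of the bipartition (as fixed at the beginning of the section), and that the case of bipartite graphs with isolated vertices is handled by Remark \ref{isolated}, since deleting isolated vertices does not change the poset's dimension either.
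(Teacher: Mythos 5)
Your proposal is correct and matches the paper's intended argument: the corollary follows immediately by combining Corollary \ref{ub} with the equivalence from \cite{halldorsson11} between permutational $k$-representability of a comparability graph and the corresponding poset having dimension at most $k$, using that every bipartite graph is a comparability graph. Nothing further is needed.
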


Considering the possibility of occurrence of the empty word $\varepsilon$ in Step 6(b) of the relabeling algorithm, the upper bound given in Corollary \ref{ub} can be improved for a special type of bipartite graphs as per the following corollary.

\begin{corollary}
	Let $\{a_{i_1}, \ldots, a_{i_k}\} \subseteq V_1$ be the set of vertices each of which is adjacent to all vertices of $V_2$ in a bipartite graph $G$, then $\mathcal{R}^p(G) \leq m - k + 1$.
\end{corollary}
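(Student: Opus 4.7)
The plan is to reapply the relabeling algorithm, exploiting the single degree of freedom available in Step 3 (the choice of $c_1$) and then counting the non-empty permutations in the output. The key observation is that in Step 6(b) the factor $w_i$ (for $i \in \{2, \ldots, m\}$) degenerates to $\varepsilon$ precisely when the vertex labelled $c_i$ satisfies $V_2 \setminus N(c_i) = \varnothing$, that is, when it is adjacent to every vertex of $V_2$; and that $w_{m+1}$ is non-empty precisely when at least one vertex of $V_1$ is universal. So the $k$ ``universal'' vertices $a_{i_1}, \ldots, a_{i_k}$ are exactly the labels which can absorb empty slots among $w_2, \ldots, w_m$.

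Assume first that $1 \le k \le m-1$. Since $k < m$, there is at least one non-universal vertex in $V_1$; I would take $c_1$ in Step 3 to be one such vertex. This is permitted because Step 3 only asks that $V_2 \setminus N(c_1) \ne \varnothing$. With this choice, all $k$ universal vertices receive indices in $\{c_2, \ldots, c_m\}$, producing $k$ empty permutations. The remaining $m - 1 - k$ indices in $\{c_2, \ldots, c_m\}$ sit on non-universal vertices and produce non-empty permutations. Adding the always non-empty $w_1$, and noting that $w_{m+1}$ is non-empty because $k \ge 1$, the word $w_1 \cdots w_m w_{m+1}$ is a concatenation of $1 + (m - 1 - k) + 1 = m - k + 1$ permutations of $V(G)$. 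By Theorem \ref{main} this word represents $G$, yielding $\mathcal{R}^p(G) \le m - k + 1$.

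The boundary cases deserve only brief mention: $k = 0$ is already subsumed by Corollary \ref{ub}, and $k = m$ forces $G = K_{m,n}$, which is handled directly by Step 2 with two permutations. There is no genuine obstacle in the argument; it is essentially a bookkeeping refinement of the proof of Corollary \ref{ub} which uses the flexibility in Step 3 to ensure that every universal vertex contributes an empty slot. The only mild subtlety is to verify that choosing $c_1$ non-universal does not interfere with the correctness argument of Theorem \ref{main} — but that proof treats $c_1$ on an equal footing with the other $c_i$'s, so no change is needed.
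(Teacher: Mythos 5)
Your main count is exactly the argument the paper intends (the paper offers nothing beyond the remark about $\varepsilon$ occurring in Step 6(b)): since Step 3 already \emph{forces} $c_1$ to be non-universal -- so the ``degree of freedom'' you invoke is not actually needed -- every vertex adjacent to all of $V_2$ receives a label $c_i$ with $i \ge 2$ and contributes $w_i = \varepsilon$, leaving $w_1$, the $m-1-k$ permutations coming from the non-universal vertices of $V_1\setminus\{c_1\}$, and $w_{m+1}$ (non-empty because $k \ge 1$). That is $1+(m-1-k)+1 = m-k+1$ permutations whose concatenation represents $G$ by Theorem~\ref{main}, so for $1 \le k \le m-1$ (and trivially for $k=0$, via Corollary~\ref{ub}) your proof is correct and coincides with the paper's.

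The boundary case $k=m$ is not as harmless as you claim. There $G=K_{m,n}$, the algorithm stops at Step 2 and outputs two permutations, and two permutations prove only $\mathcal{R}^p(G)\le 2$, not the claimed $m-k+1=1$. In fact the bound as literally stated fails for complete bipartite graphs other than $K_{1,1}$: for $K_{2,2}$ one has $k=m=2$, so the corollary would give $\mathcal{R}^p(K_{2,2})\le 1$, whereas $\mathcal{R}^p(K_{2,2})=2$ since only complete graphs are permutationally $1$-representable. So the statement must be read as implicitly excluding the complete bipartite case (equivalently, assuming $k<m$), or its conclusion weakened to $\max\{m-k+1,\,2\}$; your write-up papers over this by asserting that Step 2 ``handles'' the case, which it does not. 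This defect is inherited from the statement itself rather than from your counting, but a careful proof should flag it rather than dismiss it.
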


\section{Conclusion}

In this paper, we obtained a word $w$ that is a concatenation of permutations of the vertices of a given bipartite graph $G$ through a polynomial time algorithm such that $w$ represent $G$. Accordingly, we obtain a tight upper bound for the permutation representation number of bipartite graphs and hence an upper bound for the representation number of bipartite graphs.  We believe that the relabeling algorithm introduced in this paper has a scope to extend to a larger class of comparability graphs.   

\section*{Acknowledgment} 
We are thankful to the referees for their comments which improved the presentation of the paper. The first author is thankful to the Council of Scientific and Industrial Research (CSIR), Government of India, for awarding the research fellowship for pursuing Ph.D. at IIT Guwahati.

\end{document}